\newtheorem{definition}{Definition}
\newtheorem{proposition}{Proposition}
\newcommand{\set}[1]{\ensuremath{\left\{#1\right\}}} 
\newcommand{\hd}[1]{\mathsf{H}(#1)}
\newcommand{\tl}[1]{\mathsf{S}(#1)}
\newcommand{\comp}{\mathbin{;}}
\newcommand{\pa}{\mathit{PA}}
\newcommand{\seen}{\mathit{SEEN}}
\renewcommand{\mp}{\mathit{MP}}
\newcommand{\pms}{\mathit{PMS}}
\newcommand{\crs}{\mathit{CRS}}
\newcommand{\pd}{\mathit{PD}}
\begin{document}
%

\title{Path Conditions and Principal Matching:\\ A New Approach to Access Control}

\author{Jason Crampton}
\author{James Sellwood}
\affil{Information Security Group\\ Royal Holloway, University of London}


\maketitle
\begin{abstract}
Traditional authorization policies are user-centric, in the sense that authorization is defined, ultimately, in terms of user identities.
We believe that this user-centric approach is inappropriate for many applications, and that what should determine authorization is the relationships that exist between entities in the system.
While recent research has considered the possibility of specifying authorization policies based on the relationships that exist between peers in social networks, we are not aware of the application of these ideas to general computing systems.
We develop a formal access control model that makes use of ideas from relationship-based access control and a two-stage method for evaluating policies.
Our policies are defined using path conditions, which are similar to regular expressions.
We define semantics for path conditions, which we use to develop a rigorous method for evaluating policies.
We describe the algorithm required to evaluate policies and establish its complexity.
Finally, we illustrate the advantages of our model using an example and describe a preliminary implementation of our algorithm.
\end{abstract}

%
%
%

\section{Introduction}
Access control is an essential security service in any multi-user computer system.
It provides a mechanism by which different users are restricted in the actions they can perform within the system.
An access control service typically comprises a policy decision point and a policy.
An attempt by a user to interact with a system resource, usually known as an authorization request, is evaluated by the policy decision point and is only permitted if that interaction is authorized by the policy.

An access control model provides a syntax for authorization policies and a specification of the algorithm used by the policy decision point to evaluate requests.
Many access control models focus on the user and authorizing the user to perform particular actions.
As is customary in the literature, we will use the terms \emph{subjects} and \emph{objects} when referring to the parties who are to, respectively, perform and be the target of authorization (inter)actions.

Access control has been the subject of significant research and development in the last 40 years.
As our use of technology and the connectivity of our devices has increased, the need for ever more robust and scalable access control models has also grown.
New models attempt to improve on the failings of their predecessors, and often do so by redefining the policy foundations upon which authorization decisions are made.
The protection matrix model, for example, simply enumerated all authorized actions.
While this provides for precise specification of authorization policies, it does not scale well and is difficult to manage.
In order to ease this administrative burden, various improvements have been employed by modern operating systems.
The Unix operating system, for example, replaces the individual subjects with a mapping, performed at the time of request evaluation, to one of three security principals (owner, group and world)~\cite{Crampton_UnixAccessControl}.
In this way, whilst each object must still be enumerated, the enumeration of subjects is limited to just these three security principals.
With complex systems involving numerous users, this design dramatically reduces the space and administrative complexity of the underlying policy.
However, it also greatly reduces the flexibility afforded when compared with defining authorization at the user level.

Role-based access control (RBAC), which is widely used and has been the subject of extensive research in recent years, assigns a user to one or more organizational roles.
These roles are then authorized to perform certain actions on particular resources.
These roles, which are defined on a per-system basis, thus reduce the administrative burden of the protection matrix (assuming the number of roles is significantly less than the number of users), and provide a level of flexibility not available within the Unix model.
This increased flexibility also restores some of the clarity that was lost when users where abstracted behind Unix's three, very general, security principals.

A significant disadvantage with RBAC is that it takes no account of the specific relationship that might exist between a user and the resource for which access is requested.
Thus every user assigned to a doctor role can access all electronic health records if the doctor role is authorized to do so.
Clearly, it would more appropriate if the only users that are authorized to access a particular health record have a specific relationship with the subject of that record.
In short, RBAC is not as ``fine-grained'' as its supporters claim.
RBAC models that use private or parameterized roles have been introduce to tackle these kinds of problems~\cite{Ge_ParamRoles,Giuri_RT,Sandhu_RBAC}.
However, this often leads to a proliferation of roles that undermines the advantages provided by the basic RBAC model (as the number of roles tends towards the number of users).
Thus, we believe a new approach is required: an approach that combines the scalability of RBAC with the granularity of the protection matrix model and permits the specification of authorization rules on a per user-resource basis.

Recent research on access control in social networks has used the (social) relationship(s) that exist between users in such networks as the basis for specifying authorization rules~\cite{Carminati_Enforcing,ChengPS12passat,ChengPS12dbsec,Fong_ReBAC}.
The relationship information available in social networks provides additional context from which access control decisions can be derived.
We believe that relationship-based access control could be applied in many other scenarios.
In particular, the coarse-grained decision-making in RBAC can be refined using such relationship information.

In this paper, therefore, we develop a novel access control model in which policies are specified in terms of path conditions.
To a crude approximation our model takes inspiration from three sources: the overall design of the decision algorithm is similar to Unix; the path conditions are similar in spirit to some of the proposals for relationship-based access control; and the use of implementation-specific authorization principals bears some resemblance to RBAC.
We believe our path conditions provide a more rigorous foundation for access control mechanisms than existing proposals for relationship-based access control.
We also believe our use of authorization principals provides the desired scalability.


Our model introduces several novel contributions, the most significant being a generic model for access control systems using relationships that is not limited to social networks but can be used to describe access control within more traditional and more diverse environments.
Our support for logical entities, as well as the more usual users and resources, allows for a fine grained definition of authorization capable of taking into consideration relevant contextual information encoded in the relationships a request's participants have with other entities.
This is balanced with our abstraction of authorization policy to principals rather than subjects, allowing a scalable system which remains powerful and expressive.

In the next section, we describe our model for access control.
This section includes the definitions of path conditions, principal-matching rules and authorization policies, and an explanation of how requests are evaluated.
In Section~\ref{sec:Algorithm}, we consider the algorithm for matching principals in more detail, presenting a pseudo-code listing, an analysis of the algorithm's complexity and a description of a preliminary implementation in Python.
We also describe the results of some simple experiments.
We then compare our model to existing, related work and conclude the paper with a summary of our contributions and ideas for future work.
The appendix includes an extended example, fragments of which are used throughout the paper.
This extended example is used in our experiments.

\section{The Authorization Model}\label{sec:AuthZModel}


Informally our model is based on the idea of a labelled graph, in which nodes represent entities within the system and edges represent relationships between entities.
Nodes may represent concrete entities, such as users and resources, or logical entities, with which other entities are associated.
The relationships' labels are used to define path conditions which can be matched by chains of edges within the graph.
A path condition, essentially, identifies a set of authorization principals that is associated with a request.
Those principals are authorized to perform actions, thus determining whether a request is authorized or not.
Thus our model uses a two-stage decision process: we first identify the principals relevant to the request and then determine whether those principals are authorized.

As we allow entities of various types within our graph, we can make use of a variety of kinds of relationship when processing the authorization decision.
If we were to solely include users within our graph, then it could mimic a social network and would be limited to inter-personal relationships for access control policy definition.
By including group and resource entity types, we expand the possible subjects and objects, and also the possible relationships which can inform authorization decisions.
In Section~\ref{sec:AuthZModel:Ex}, we show that the RBAC model can be seen as an instance of our model.
If, however, we include additional entity types and relationships then we can make more fine-grained decisions, as illustrated by the extended example in Appendix~\ref{appendix:section:example}.

\subsection{The System Model}

Formally, we assume the existence of a set of system entities, which includes the sets of subjects and objects.
Each entity has a type and relationships may exist between certain types of entities.
Some relationships, such as \textsf{Sibling-of}, are symmetric, while others, such as \textsf{Brother-of}, are not.
A system model defines the types along with the entity relationships that are permitted.

\begin{definition}
    A \emph{system model} comprises a set of types $T$, a set of relationship labels $R$, a set of \emph{symmetric} relationship labels $S \subseteq R$ and a \emph{permissible relationship graph} $G_{\textrm{PR}} = (V_{\textrm{PR}},E_{\textrm{PR}})$, where $V_{\textrm{PR}} = T$ and $E_{\textrm{PR}} \subseteq T \times T \times R$.
\end{definition}

The example in Appendix~\ref{appendix:section:example} defines a number of types, including \textsf{Group}, \textsf{Project} and \textsf{User}, and the relationship type \textsf{Client-of}.
Part of the permissible relationship graph includes the edges $(\textsf{Group}, \textsf{Project})$ and $(\textsf{Group}, \textsf{User})$, both labelled with the \textsf{Client-of} relationship.
Figure~\ref{img:example_company_edge_constraints} (in the appendix) defines the entire permissible relationship graph.

\begin{definition}
    Given a system model $(T,R,S,G_{\textrm{PR}})$, a \emph{system instance} is defined by a \emph{system graph} $G = (V,E)$ where $V$ is the set of entities and $E \subseteq V \times V \times R$.
    We say $G$ is \emph{well-formed} if for each entity $v$ in $V$, $\tau(v) \in T$, and for every edge $(v,v',r) \in E$, $(\tau(v),\tau(v'),r) \in E_{\textrm{PR}}$.
\end{definition}

The system model constrains the `shape' of the system graph by restricting the edges that can be specified.
Note that we may have multiple edges between two entities in our system graph (because two or more relationships may exist between vertices).
Such a graph is sometimes called a \emph{multigraph}.
We will depict an edge $(v,v',s)$, when $s \in S$, without arrowheads, as can be seen in Figure~\ref{img:relationship_examples} in the case of the \textsf{Sibling-of} relationship.
(Due to the symmetry of $s$, the edge $(v,v',s)$ implies an edge $(v',v,s)$ and vice versa.)
An edge $(v,v',r)$, when $r \in R \setminus S$, is directed from $v$ to $v'$, depicted with an arrowhead pointing towards $v'$ (see Figure~\ref{img:relationship_examples:a}).
The directed edges $(v,v',r)$ and $(v',v,r)$ represent two different relationships. Of course both may belong to $E$, in which case this will be depicted with arrowheads at both ends of the link between $v$ and $v'$ (see Figure~\ref{img:relationship_examples:b}).

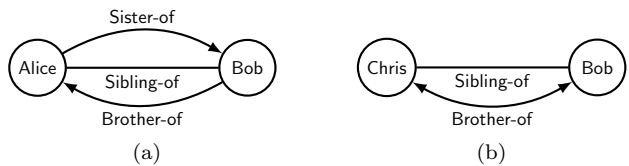
\begin{figure}[!ht]\centering
    \subfloat[]{
        \begin{tikzpicture}
            [node distance=3cm and 2cm,
            every circle node/.style={draw,minimum width=30pt},thick,
            every node/.append style={scale=0.7, transform shape}]
            \begin{scope}[>=latex] 
                \node[circle] (a) {\textsf{Alice}};
                \node[circle,right=of a] (b) {\textsf{Bob}};
                \draw[thick] (a) to node[auto,swap] {\textsf{Sibling-of}} (b);
                \draw[<-,thick] (a) to[color=black,bend right] node[auto,swap] {\textsf{Brother-of}} (b);
                \draw[->,thick] (a) to[color=black,bend left] node[auto] {\textsf{Sister-of}} (b);
            \end{scope}
        \end{tikzpicture}
        \label{img:relationship_examples:a}
    }\qquad
     \subfloat[]{
        \begin{tikzpicture}
            [node distance=3cm and 2cm,
            every circle node/.style={draw,minimum width=30pt},thick,
            every node/.append style={scale=0.7, transform shape}]
            \begin{scope}[>=latex] 
                \node[circle] (a) {\textsf{Chris}};
                \node[circle,right=of a] (b) {\textsf{Bob}};
                \draw[thick] (a) to node[auto,swap] {\textsf{Sibling-of}} (b);
                \draw[<->,thick] (a) to[color=black,bend right] node[auto,swap] {\textsf{Brother-of}} (b);
            \end{scope}
        \end{tikzpicture}
        \label{img:relationship_examples:b}
    }
    \caption{Illustrating different edges in the system graph}\label{img:relationship_examples}
\end{figure}

Figure~\ref{img:example_company_system_graph} (in the appendix) depicts a system graph containing a substantial number of nodes of different types and the relationships that exist between those nodes.
Figure~\ref{img:system-graph-fragment} shows a simple example of a system graph for illustrative purposes, based on the one in the appendix.
Users (such as $U_1$) are associated with projects ($P_1$); documents ($D_1$ and $D_2$) are grouped together in folders ($F_1$ and $F_2$) and allocated to one or more projects, either as part of a group or as a single resource.
Relationships include \textsf{Participant-of}, \textsf{Supervises}, \textsf{Resource-for}, and \textsf{Member-of}.

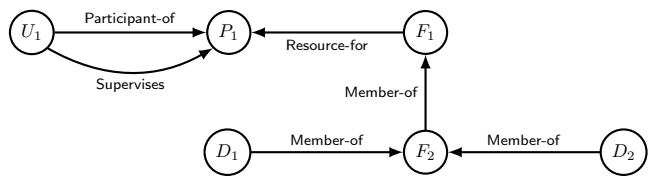
\begin{figure}[h]\centering
  \begin{tikzpicture}
      [node distance=1cm and 2cm,
      every circle node/.style={draw,minimum width=20pt},thick,
      every node/.append style={scale=0.7, transform shape}]
      \begin{scope}[>=latex] 
	  \node[circle] (u1) {$U_1$};
	  \node[circle,right=of u1] (p1) {$P_1$};
	  \node[circle,right=of p1] (f1) {$F_1$};
	  \node[circle,below=of f1] (f2) {$F_2$};
	  \node[circle,left=of f2] (d1) {$D_1$};
	  \node[circle,right=of f2] (d2) {$D_2$};
	  \draw[thick,->] (u1) to node[auto] {\textsf{\footnotesize Participant-of}} (p1);
	  \draw[thick,->] (u1.south east) to [bend right] node[auto,swap] {\textsf{\footnotesize Supervises}} (p1.south west);
	  \draw[thick,->] (f1) to node[auto] {\textsf{\footnotesize Resource-for}} (p1);
	  \draw[thick,->] (f2) to node[auto] {\textsf{\footnotesize Member-of}} (f1);
	  \draw[thick,->] (d1) to node[auto] {\textsf{\footnotesize Member-of}} (f2);
	  \draw[thick,->] (d2) to node[auto,swap] {\textsf{\footnotesize Member-of}} (f2);
      \end{scope}
  \end{tikzpicture}
\caption{A fragment of a system graph}\label{img:system-graph-fragment}
\end{figure}


\subsection{Path Conditions}\label{sec:AuthZModel:PathConds}

We use path conditions to match requests to principals (described in Section~\ref{sec:AuthZModel:ReqEval}).
In this section, we define the syntax and semantics of path conditions, and establish some basic properties of path conditions, thereby allowing us to restrict our attention to simple path conditions.

\begin{definition}\label{def:path-condition}
   Given a set of relationships $R$, we define a \emph{path condition} recursively:
    \begin{itemize}
	\item $\diamond$ is a path condition;
        \item $r$ is a path condition, for all $r \in R$;
        \item if $\pi$ and $\pi'$ are path conditions, then $\pi \comp \pi'$, $\pi^+$ and $\overline{\pi}$ are path conditions.
    \end{itemize}
   A path condition of the form $r$ or $\overline{r}$, where $r \in R$, is said to be an \emph{edge condition}.
\end{definition}

Informally, $\pi \comp \pi'$ represents the concatenation of two path conditions; $\pi^+$ represents one or more occurrences, in sequence, of $\pi$; and $\overline{\pi}$ represents $\pi$ reversed.
We define $\diamond$ for completeness.
We note that individual edge conditions could be encoded using attribute-based access control (ABAC) but it is hard to see how ABAC could be easily employed to encode longer chains of relationships.

\begin{definition}\label{def:path-condition-semantics}
 Given a system graph $G = (V,E)$ and $u,v \in V$, we write $G,u,v \models \pi$ to denote that  $G$, $u$ and $v$ \emph{satisfy path condition} $\pi$.
 Formally, for all $G,u,v,\pi,\pi'$:
    \begin{itemize}
        \item $G,u,v \models \diamond$ iff $v = u$;
        \item $G,u,v \models r$ iff $(u,v,r) \in E$;
        \item $G,u,v \models \pi \comp \pi'$ iff there exists $w \in V$ such that $G,u,w \models \pi$ and $G,w,v \models \pi'$;
        \item $G,u,v \models \pi^+$ iff $G,u,v \models \pi$ or $G,u,v \models \pi \comp \pi^+$;
        \item $G,u,v \models \overline{\pi}$ iff $G,v,u \models \pi$.
    \end{itemize}
\end{definition}

Note that an edge condition is satisfied by nodes that are adjacent in the system graph.
We use $\diamond$ to identify an empty path condition, which is of particular use in our path-matching algorithm in Section~\ref{sec:Algorithm:Description}.

In the context of the graph in Figure~\ref{img:system-graph-fragment}, for example, we have $G,U_1,F_1 \models \textsf{Participant-of} \comp \overline{\textsf{Resource-for}}$ since $G,U_1,P_1 \models \textsf{Participant-of}$ and $G,F_1,P_1 \models \textsf{Resource-for}$.

\begin{definition}\label{def:path-condition-equivalence}
 Path conditions $\pi$ and $\pi'$ are said to be \emph{equivalent}, denoted $\pi \equiv \pi'$, if, for all system graphs $G = (V,E)$ and all $u,v \in V$ we have
  \[
   G,u,v \models \pi \quad\text{if and only if} \quad G,u,v \models \pi'.
  \]
\end{definition}

\begin{proposition}\label{pro:simple-equivalences}
For all path conditions $\pi_1$ and $\pi_2$:
   \begin{inparaenum}[\em (i)]
    \item $\pi_1 \equiv \pi_1 \comp \diamond \equiv \diamond \comp \pi_1$
    \item $\overline{\diamond} \equiv \diamond$
    \item $\overline{\pi_1 \comp \pi_2} \equiv \overline{\pi_2} \comp \overline{\pi_1}$
    \item $\overline{\pi_1^+} \equiv \overline{\pi_1}^+$.
   \end{inparaenum}
\end{proposition}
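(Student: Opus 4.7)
\emph{Proof plan.} Each of the four equivalences should follow directly from Definition~\ref{def:path-condition-semantics} by unfolding both sides. My uniform approach is to fix an arbitrary system graph $G=(V,E)$ and arbitrary vertices $u,v \in V$, and then show $G,u,v \models \pi$ iff $G,u,v \models \pi'$ for each proposed equivalent pair $(\pi,\pi')$.

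For (i) and (ii) I expect routine one-step unfoldings to suffice. Specifically, $G,u,v \models \pi_1 \comp \diamond$ expands to ``there exists $w$ with $G,u,w \models \pi_1$ and $w = v$'', which collapses to $G,u,v \models \pi_1$; the case $\diamond \comp \pi_1$ is symmetric, and (ii) reduces to $v = u$ iff $v = u$. For (iii) I would unfold the left-hand side once with the overline clause and once with the $\comp$ clause to obtain ``there exists $w$ with $G,v,w \models \pi_1$ and $G,w,u \models \pi_2$'', then rewrite each conjunct as $G,w,v \models \overline{\pi_1}$ and $G,u,w \models \overline{\pi_2}$ respectively; this is precisely the unfolding of $G,u,v \models \overline{\pi_2} \comp \overline{\pi_1}$, with the same witness $w$ serving both directions.

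The main obstacle is (iv), because the semantics of $\pi^+$ is given by a recursive equation rather than by a structural rule, so a naive structural induction on $\pi_1$ does not directly apply. My plan is to first prove an \emph{unrolling lemma}: $G,u,v \models \pi^+$ if and only if there exist $n \ge 1$ and vertices $u = w_0, w_1, \ldots, w_n = v$ in $V$ with $G, w_{i-1}, w_i \models \pi$ for every $1 \le i \le n$. One direction of the lemma is proved by induction on $n$, using the $\pi \comp \pi^+$ clause repeatedly; the other is proved by induction on the depth of the derivation of $G,u,v \models \pi^+$ from the two recursive clauses of Definition~\ref{def:path-condition-semantics}.

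With the unrolling lemma in hand, (iv) becomes immediate. If $G,u,v \models \overline{\pi_1^+}$, then $G,v,u \models \pi_1^+$, so there is a $\pi_1$-chain $v = w_0, \ldots, w_n = u$. Reversing the indexing and applying the edge-reversal property $G, w_{i-1}, w_i \models \pi_1$ iff $G, w_i, w_{i-1} \models \overline{\pi_1}$ yields a $\overline{\pi_1}$-chain from $u$ to $v$, and the unrolling lemma applied to $\overline{\pi_1}$ delivers $G,u,v \models \overline{\pi_1}^+$. The converse direction is entirely symmetric.
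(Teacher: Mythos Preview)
Your proposal is correct. For parts (i)--(iii) your argument is exactly the paper's: fix $G$, $u$, $v$ and unfold Definition~\ref{def:path-condition-semantics}; the paper in fact only writes out case~(iii) and it matches your sketch line for line.

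For part~(iv) there is a difference of detail rather than of strategy. The paper simply asserts that all four items ``follow immediately'' from the definitions and gives no separate argument for (iv). You correctly identify that the recursive clause for $\pi^+$ makes (iv) less immediate than (i)--(iii), and your unrolling lemma (finite $\pi$-chains characterize $\models \pi^+$, proved by induction on $n$ in one direction and on derivation depth in the other) is the standard and correct way to make this rigorous. In effect you are supplying the argument the paper elides: the reversal of a $\pi_1$-chain from $v$ to $u$ is a $\overline{\pi_1}$-chain from $u$ to $v$, which is exactly what ``follows from the definitions'' once one has the chain characterization in hand. So your route is not genuinely different, just more explicit; it buys you a proof that does not rely on the reader silently taking the least-fixed-point reading of the $\pi^+$ clause for granted.
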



\begin{proof}
 All results follow immediately from Definitions~\ref{def:path-condition-semantics} and~\ref{def:path-condition-equivalence}.
 Consider (iii), for example.
 By definition, $G,u,v \models \overline{\pi_1 \comp \pi_2}$ if and only if $G,v,u \models \pi_1 \comp \pi_2$.
 And $G,v,u \models \pi_1 \comp \pi_2$ if and only there exists $w$ such that $G,v,w \models \pi_1$ and $G,w,u \models \pi_2$.
 Thus we have $G,u,v \models \overline{\pi_1 \comp \pi_2}$ if and only if there exists $w$ such that $G,w,v \models \overline{\pi_1}$ and $G,u,w \models \overline{\pi_2}$.
 That is $G,u,v \models \overline{\pi_2} \comp \overline{\pi_1}$.
\end{proof}

\begin{definition}
   Given a set of relationships $R$, we define a \emph{simple path condition} recursively:
    \begin{itemize}
	\item $\diamond$, $r$ and $\overline{r}$, where $r \in R$, are simple path conditions;
        \item if $\pi \ne \diamond$ and $\pi' \ne \diamond$ are simple path conditions, then $\pi \comp \pi'$ and $\pi^+$ are simple path conditions.
    \end{itemize}
\end{definition}

In other words, $\overline{\star}$ occurs in a simple path condition if and only if $\star$ is an element of $R$.
It follows from Proposition~\ref{pro:simple-equivalences} that every path condition may be reduced to a simple path condition.
The path condition $\overline{\overline{r_1 \comp r_2} \comp (r_1 \comp r_3)^+}$, for example, can be transformed into the equivalent path condition $(\overline{r_3} \comp \overline{r_1})^+ \comp r_1 \comp r_2$ using the equivalences in Proposition~\ref{pro:simple-equivalences}.
Henceforth, we assume all path conditions are simple.

\subsection{Policy Specification}\label{sec:AuthZModel:PolicySpec}

Subjects within a system request authorization to perform actions on objects. The policies of a system define the authorized and unauthorized actions and the rules for determining the principals to which these actions are assigned. Principals are mapped to paths within the system graph, where these paths exist between the subject and object of an authorization request. The potential paths are described by path conditions, which are defined using relationships.

\begin{definition}
     Let $P$ be a set of \emph{authorization principals} and let $R$ be a set of relationship labels.
     A \emph{principal-matching rule} has the form $(\pi, p)$, where $p$ is an authorization principal and $\pi$ is either a path condition defined on $R$ or the special symbol $\top$.
     The path condition $\pi$ is called a \emph{principal-matching condition}. A \emph{principal-matching policy} $\rho$ is a list of principal-matching rules.
\end{definition}

Informally, a principal-matching rule $(\pi,p)$ is applicable to a request $(s,o,a)$ if there is a path from $s$ to $o$ in the system graph that satisfies $\pi$.

In order to support scenarios where a default principal should apply, much like the concept of `world' in the Unix access control system, we allow the definition of a special principal-matching rule with the principal-matching condition set to $\top$.
This default principal-matching rule is, if present, always the last rule in the principal-matching policy and (whenever it is evaluated) is applicable to every request.
This rule's associated principal, therefore, matches whenever the rule is evaluated.

\begin{definition}
    An \emph{authorization rule} has the form $(p,\star,a,b)$ or $(p,o,a,b)$, where $a$ is an action, $p$ is a principal, $o$ is an object and $b \in \set{0,1}$.
    An \emph{authorization policy} is a list of authorization rules.
\end{definition}

A rule of the form $(p,o,a,0)$ asserts that $p$ is explicitly \emph{unauthorized} (or \emph{prohibited}) to perform action $a$ on object $o$, while the rule $(p,o,a,1)$ explicitly authorizes $p$.
Rules of this form allow us to specify on a per-object basis the actions for which a principal $p$ is (un)authorized.
A rule of the form $(p,\star,a,0)$ asserts that the principal is unauthorized for all objects, while $(p,\star,a,1)$ asserts that the principal is authorized for all objects.
Rules of this form allow us to specify the actions for which a principal is (un)authorized, irrespective of the object to which access is requested.
In this case, the authorization policy is concentrated in the principal-matching rule.
Note also that we can combine rules $(p,\star,a,0)$ and $(p,o,a,1)$, for example, to specify that action $a$ is generally unauthorized for principal $p$, but is, as an exception, authorized for object $o$.

Table~\ref{tbl:example_company_principal_matching_policy} (in the appendix) lists the principal-matching rules for our example whilst Table~\ref{tbl:example_company_authorization_policy} lists the authorization rules. A combination of authorization rules has been used in Table~\ref{tbl:example_company_authorization_policy} to ensure that the \textsf{Project Resource User} is specifically unable to write to \textsf{Func.Spec.\#1} whilst other objects are writable by that principal.

A principal may be explicitly authorized or unauthorized for particular actions.
The absence of any explicit authorization rules may itself be considered an implicit authorization depending on the default behaviour of the system.
A default access control decision (allow or deny) needs to be specified in the event that no authorization rules apply to a request.
Systems may need to support \textsf{allow-by-default} when the system enters an emergency state, such as the opening of fire exit doors when there is a fire.
Other circumstances will commonly require fail-safe handling, where a \textsf{deny-by-default} strategy is implemented in order to ensure no unauthorised access is allowed. Some systems may be deemed so sensitive that there may be no conditions under which \textsf{allow-by-default} would be enabled.
In Section~\ref{sec:AuthZModel:ReqEval:Defaults} we discuss the specification of default strategies in our model.

\subsection{Request Evaluation}\label{sec:AuthZModel:ReqEval}

Our model for request evaluation is inspired by the Unix access control model and relationship-based access control models and is summarized in Figure~\ref{img:overview}.
From the Unix model, we take the idea of binding a request to a principal before computing an access control decision, which we combine with the idea of specifying authorization policies in terms of relationships.
Firstly, we use the subject and object specified in the request to compute a set of applicable principals.
Then we compute the actions for which those principals are authorized.
Finally a decision is made to allow or deny the request based on those authorizations.


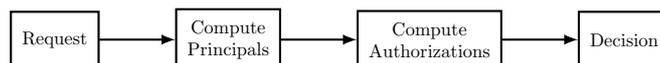
\begin{figure}[!ht]\centering
    \begin{tikzpicture}
            [node distance=3cm and 1cm,
            every rectangle node/.style={draw,minimum width=30pt, minimum height=30pt, align=center,inner sep=6pt},thick,
            every node/.append style={scale=0.7, transform shape}]
            \begin{scope}[>=latex] 
                \node[rectangle] (a) {Request};
                \node[rectangle,right=of a] (b) {Compute\\Principals};
                \node[rectangle,right=of b] (c) {Compute\\Authorizations};
                \node[rectangle,right=of c] (d) {Decision};
                \draw[->,thick] (a) to (b);
                \draw[->,thick] (b) to (c);
                \draw[->,thick] (c) to (d);
            \end{scope}
    \end{tikzpicture}
    \caption{Processing overview}\label{img:overview}
\end{figure}

We now describe request evaluation, which has two main stages and is depicted schematically in Figure~\ref{img:architecture}, in more detail.
The first stage determines a list of matched principals for the request: in Figure~\ref{img:architecture} this stage is represented by the horizontal row of steps from `START'.
%
The second stage determines the authorizations explicitly defined for those matched principals identified in the first stage.
This second stage is represented in Figure~\ref{img:architecture} by the vertical column of steps beginning at the `MP list empty?' decision point.
A conflict resolution process is employed to resolve any conflicting authorization rules and from this a decision is made.

\begin{figure*}[!th]\centering
  \begin{tikzpicture}
        [auto,
        input/.style={rectangle, text width=5em, align=center, minimum height=4em}, 
        decision/.style={diamond, draw=blue, thick, fill=blue!20, text width=4.5em, align=flush center, inner sep=1pt}, 
        block/.style ={rectangle, draw=blue, thick, fill=blue!20, text width=5em, align=center, rounded corners, minimum height=4em}, 
        start/.style ={ellipse, draw=teal, thick, fill=teal!20, text width=5.5em, align=center, minimum height=3em}, 
        end/.style ={ellipse, draw=red, thick, fill=red!20, text width=5.5em, align=center, minimum height=3em}, 
        list/.style ={rectangle, draw, thick, text width=5em, align=center, rounded corners, minimum height=4em}, 
        strategy/.style ={rectangle, draw, thick, text width=5em, align=center, rounded corners, minimum height=4em, dashed}, 
        line/.style ={draw, thick, -latex}, 
        option/.style ={dashed}, 
        every node/.append style={scale=0.65, transform shape}] 
        \matrix [column sep=5mm,row sep=5mm]
        {
            \node [input] (a1) {request $q = (s,o,a)$}; &
            \node [input] (a2) {system graph $G = (V,E)$}; &
            \node [list] (a3) {principal-matching policy $\rho$}; &
            \node [input] (a4) {default-per-subject}; &
            \node [input] (a5) {default-per-object}; &
            \node [input] (a6) {system-wide default};
            \\
            \node [start] (b1) {START}; &
            \node [block] (b2) {compute principals}; &
            \node [list] (b3) {list of matched principals $MP$}; &
            \node [decision] (b4) {$MP$ list empty?}; &
            \node [block] (b5) {process rules for no matching principal}; &
            \node [end] (b6) {authorization decision};
            \\
            &
            \node [strategy] (c2) {principal-matching strategy $PMS$}; &
            \node [list] (c3) {author\-ization policy $PA$}; &
            \node [block] (c4) {compute authorizations}; &
            \node [input] (c5) {request $q = (s,o,a)$}; &
            \\
            \node [input] (d1) {\textsf{FirstMatch}}; &
            \node [input] (d2) {\textsf{AllMatch}}; &
            &
            \node [input] (d4) {set of possible decisions $PD$}; &
            \node [input] (d5) {default per object}; &
            \node [input] (d6) {system-wide default};
            \\
            &
            \node [input] (e2) {\textsf{DenyOverride}}; &
                        \node [input] (g2) {\textsf{AllowOverride}}; &

            \node [decision] (e4) {$PD = \emptyset$?}; &
            \node [block] (e5) {process rules for no explicit permissions}; &
            \node [end] (e6) {authorization decision};
            \\
             &
            \node [input] (f2) {\textsf{FirstMatch}}; &
            \node [strategy] (f3) {conflict resolution strategy $CRS$}; &
            \node [block] (f4) {compute decision}; &
            &
            \node [end] (f6) {authorization decision};
            \\
        };
        \begin{scope}[every path/.style=line]
            \path (a1) -- (b2);
            \path (a2) -- (b2);
            \path (a3) -- (b2);
            \path (a4) -- (b5);
            \path (a5) -- (b5);
            \path (a6) -- (b5);
            \path (b1) -- (b2);
            \path (b2) -- (b3);
            \path (b3) -- (b4);
            \path (b4) -- node {Y} (b5);
            \path (b4) -- node {N} (c4);
            \path (b5) -- (b6);
            \path (c2) -- (b2);
            \path (c3) -- (c4);
            \path (c4) -- (d4);
            \path (c5) -- (b5);
            \path (c5) -- (c4);
            \path [option] (d1) -- (c2);
            \path [option] (d2) -- (c2);
            \path (d4) -- (e4);
            \path (d5) -- (e5);
            \path (d6) -- (e5);
            \path [option] (e2) -- (f3);
            \path (e4) -- node {Y} (e5);
            \path (e4) -- node {N} (f4);
            \path (e5) -- (e6);
            \path [option] (f2) -- (f3);
            \path (f3) -- (f4);
            \path (f4) -- (f6);
            \path [option] (g2) -- (f3);
        \end{scope}
    \end{tikzpicture}
    \caption{Detailed architecture}\label{img:architecture}
\end{figure*}
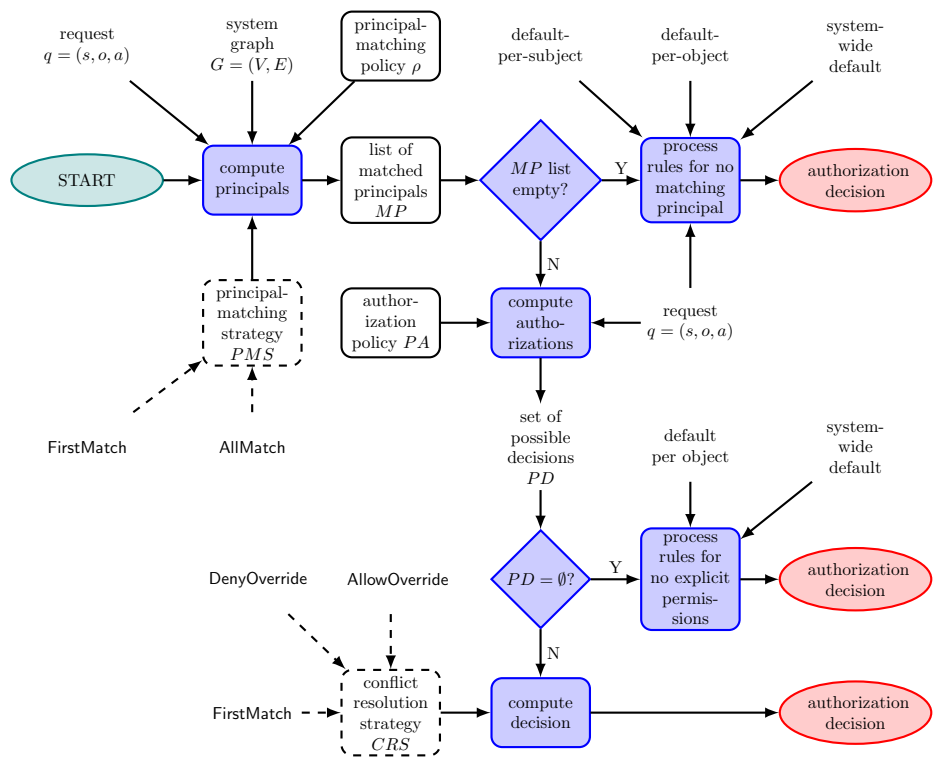

\subsubsection{Principal Matching}\label{sec:AuthZModel:ReqEval:S1}

The list of matched principals is determined by the evaluation of principal-matching rules within the principal-matching policy.
Thus, we first specify what it means for a principal-matching rule to be matched.

\begin{definition}
    Let $q = (s,o,a)$ be a request and $G = (V,E)$ be a system graph.
    Then request $q$ \emph{matches} principal-matching rule $(\pi,p)$ if $G,s,o \models \pi$.
    Given a principal-matching policy and a system graph, we write $G,q \xrightarrow{\pi} p$ if there exists a principal-matching rule $(\pi, p)$ and request $q$ matches $(\pi, p)$.
\end{definition}

Informally, a principal-matching rule maps a (complex) relationship between entities in a graph to a principal;
in other words, a principal-matching rule enables us, conceptually, to replace a path between two entities with a single edge labelled by a principal.
Figure~\ref{img:principal_matching_rule} illustrates such a matching, where request $q = (s,o,a)$ matches a principal-matching rule $(r_1 \comp \overline{r_2} \comp r_3 \comp r_4, p)$. It is worth noting that, based on the relationships shown in Figure~\ref{img:principal_matching_rule}, where $r_4$ is a symmetric label (identified by the lack of arrows on the edge), the principal-matching rule would also have been matched if the path condition had been $r_1 \comp \overline{r_2} \comp r_3 \comp \overline{r_4}$.

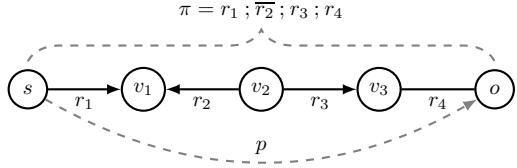
\begin{figure}[!ht]\centering
    \begin{tikzpicture}[node distance=2.5cm and 1.2cm,every circle node/.style={draw,minimum width=18pt},thick,
        every node/.append style={scale=1, transform shape},
        scale=0.8]
        \begin{scope}[>=latex] 
            \node[circle] (s) {$s$};
            \node[circle,right=of s] (v1) {$v_1$};
            \node[circle,right=of v1] (v2) {$v_2$};
            \node[circle,right=of v2] (v3) {$v_3$};
            \node[circle,right=of v3] (o) {$o$};
            \draw[->,thick] (s) to node[auto,swap] {$r_1$} (v1);
            \draw[<-,thick] (v1) to node[auto,swap] {$r_2$} (v2);
            \draw[->,thick] (v2) to node[auto,swap] {$r_3$} (v3);
            \draw[thick] (v3) to node[auto,swap] {$r_4$} (o);
            \draw[->,color=gray,dashed,thick] (s) to[color=black,bend right] node[auto] {$p$} (o);
            \draw[color=gray,decorate, decoration={brace,amplitude=12pt,raise=3pt},dashed,thick] (s.north) to node[color=black,yshift=20pt,auto] {$\pi = r_1 \comp \overline{r_2} \comp r_3 \comp r_4$} (o.north);
        \end{scope}
    \end{tikzpicture}
    \caption{Principal-matching rule}\label{img:principal_matching_rule}
\end{figure}

A request may match more than one rule in the principal-matching policy.
A \emph{principal-matching strategy} (PMS) defines how the principals in matched rules should be combined (if at all).
We consider two very natural PMSs: \textsf{FirstMatch} and \textsf{AllMatch}, but other options may be appropriate in some circumstances.
The former evaluates the list of principal-matching rules in order and terminates when a path condition is matched, returning the corresponding principal.
The latter evaluates the entire list of rules in the policy and returns a list of the principals in rules for which the request matches the path condition.


If used in conjunction with the \textsf{FirstMatch} PMS, the default principal rule $(\top,p)$, when present, would only be triggered, and so only apply, if no other rule matches.
When used with \textsf{AllMatch} this rule would always apply, resulting in the default principal always being added to the list of matched principals.

An \emph{authorization system} comprises a principal-matching policy $\rho$, a principal-matching strategy $\pms$, an authorization policy $\pa$, and a conflict resolution strategy $\crs$ (described in the next section).
%
Given an authorization system, a system graph $G$ and a request $q$, the list of \emph{matched principals} $\mp$ includes those principals resulting from successful matches made in accordance with the specified principal-matching strategy. We write $G,q \xrightarrow{\rho} \mp$ to indicate that the list of matched principals for $q$ (with respect to policy $\rho$ and system graph $G$) is $\mp$.
If $\mp$ is empty then an authorization decision must be made based on pre-defined defaults. This process is described in Section~\ref{sec:AuthZModel:ReqEval:Defaults}.

\subsubsection{Computing Authorizations and Decisions}\label{sec:AuthZModel:ReqEval:S2}
The second stage of request authorization identifies whether the requested action (on the object) is explicitly authorized or unauthorized for one or more of the matched principals. Subsequently any conflicting assignments are resolved and we determine whether the requested action should, therefore, be allowed or denied.

\begin{definition}
    Given a policy $\rho$, a request $q = (s,o,a)$ and a system graph $G$ such that $G,q \xrightarrow{\rho} \mp$, we define the set of \emph{possible decisions}, denoted $\pd$, to be $\set{b \in \set{0,1} : (p,o,a,b) \in \pa,  p \in \mp}$.
\end{definition}

$\pd$ can take one of four values: $\set{0}$, $\set{1}$, $\set{0,1}$ and $\emptyset$.
\begin{itemize}
  \item If $\pd = \set{b}$, $b \in \set{0,1}$, a decision can unambiguously be made (as a deny or allow, respectively).
  \item If $\pd = \set{0,1}$, we must employ a \emph{conflict resolution strategy} (CRS) to determine the decision.
	We define three conflict resolution strategies: \textsf{FirstMatch}, \textsf{DenyOverride} and \textsf{AllowOverride}.
    The use of one of these strategies allows a single decision to be made from the conflicting assignments.
	In order to support the first of these, we require that the set of possible decisions be determined by considering each authorization, from the list $\pa$, in turn.
	
	The \textsf{FirstMatch} CRS takes the first element to be added to $\pd$ as the decision.
	In this way, if a positive authorization is identified first, then the request is allowed.
	If a negative authorization is identified first, however, then the request is denied.
	
	The \textsf{DenyOverride} and \textsf{AllowOverride} CRSs allow their respective elements, $0$ and $1$, to take precedence over the alternative, no matter which is identified first.
  \item The final case is $\pd = \emptyset$. 
	In this case, the authorization decision must once again be made using pre-defined defaults, as explained in Section~\ref{sec:AuthZModel:ReqEval:Defaults} below.
\end{itemize}


\subsubsection{Defaults}\label{sec:AuthZModel:ReqEval:Defaults}
There are two circumstances when default decision making applies. The first is when no matched principals are identified, whilst the second is, as just described, when the set of possible decisions is empty.

To accommodate varying needs in these circumstances, we allow for default allow or deny of a request to be determined at one of the following levels: default-per-subject, default-per-object or system-wide default. We only support the default-per-subject when there are no matched principals, and not later, when there are no explicit authorizations. At the time when the set of possible decisions is determined, the subject is no longer directly relevant, having already been used to identify the appropriate matched principals. It is therefore unnecessary to reconsider the subject in order to evaluate the authorization decision.

The three defaults are evaluated in order, where specified, with the first applicable default determining the authorization decision. In this way, if there is a default specified for the subject $s$ of the request $q = (s,o,a)$, the subject's default (allow or deny) applies. If no subject default is defined for $s$, then the default for the object $o$ of the request shall apply, if specified. If there is no subject default for $s$ and no object default for $o$, then the system-wide default shall apply. Whilst defaults for the subject and object are optional and may not be specified for the entities involved in the request, a system-wide default must be specified so as to ensure authorization decisions can be made in all circumstances.

\subsection{Special Cases}\label{sec:AuthZModel:Ex}

The Unix access control mechanism employs a similar, albeit far simpler, mapping technique as that used above to identify principals from path conditions~\cite{Crampton_UnixAccessControl}. It can, therefore, be trivially represented using our model.
In particular, the system model contains a set of three types: users, groups and objects, and a set of three relationships (none of which are symmetric): \textsf{User-object}, \textsf{User-group} and \textsf{Group-object}, which we will label \textsf{uo}, \textsf{ug} and \textsf{go}, respectively.
The permissible relationship graph links the users to the objects and to the groups, as well as linking the groups to the objects (this is as our relationship naming suggests).
There are three principal-matching rules: $(\textsf{uo}, \textsf{owner})$, $(\textsf{ug} \comp \textsf{go},\textsf{group})$ and, the default, $(\top,\textsf{world})$.
Finally, we use the \textsf{FirstMatch} PMS and evaluate the rules in the above order.

Note also that we can configure our model to implement core RBAC~\cite{ANSI_RBAC}.
We assume the set of entities is the disjoint union of users, roles, permissions and objects.
Then there are two types of relationship, the \textsf{User-role} relationship, referred to as user assignment and abbreviated \textsf{ua}, along with the \textsf{Role-permission} relationship, referred to as permission assignment and abbreviated \textsf{pa}.
At its simplest we then define a principal-matching policy where each rule has the form $(\textsf{ua} \comp \textsf{pa},p)$ where the principal $p$ has the same name as the permission identified by the \textsf{pa} edge.
The authorization policy contains elements $(p,ob,op,1)$ which map the principals to objects, allowing them operations (as per the permission binary relation in RBAC).

Additionally, we can introduce the \textsf{Role-role} relationship (abbreviated \textsf{rr}) in order to extend this configuration to implement a role hierarchy.
Finally, we could also introduce the \textsf{User-permission} relationship (abbreviated \textsf{up}), in order to articulate exceptions to the basic RBAC model by directly associating permissions with users.

Our model does not directly support the concept of sessions.
However, if we were to introduce support for changing the system graph, we could employ a \textsf{User-session-role} relationship.
The \textsf{User-session-role} relationship may only connect users and roles who are already joined by a \textsf{User-role} relationship.
We then modify the original principal-matching rules to have the form $(\textsf{usr} \comp \textsf{pa},p)$.
Supporting (constrained) updates to the system graph in real time will be an important aspect of our future work.

\section{Path Matching}\label{sec:Algorithm}
Principal matching, the first stage of request evaluation, described in Section~\ref{sec:AuthZModel:ReqEval:S1}, is the most complex part of request evaluation.
(The second stage amounts to a sequence of simple lookups and comparisons.)
Principal matching requires us to determine whether there exists a path in the graph from subject to object that matches a path condition.
In this section, we describe the \textsf{MatchPrincipal} algorithm, which takes a path condition, two nodes (the subject and object of a request), the set of symmetric relationship labels and a system graph as inputs and returns a Boolean value indicating whether there exists a matching path in the graph.

The algorithm uses a (modified) breadth-first search to determine whether there exists a path in the system graph that begins at the subject and ends at the object such that concatenation of the relationship labels is equal to the path condition.
It is employed iteratively to as many rules in the principal-matching policy as required, given the PMS in use: if \textsf{FirstMatch} is used then the algorithm is run on each principal-matching rule in turn, until a match is found; if the \textsf{AllMatch} PMS is used, the algorithm is run for every rule in the policy.
In order to determine satisfaction of a simple path condition, we attempt to satisfy its component edge conditions one at a time.
It is helpful to define the head and suffix of a path condition: the head is used to match edge labels in the graph, while the suffix determines the residual path condition.

\begin{definition}\label{def:HeadAndSuffix}
Let $\pi \ne \diamond$ be a simple path condition.
Then we define the \emph{head} and \emph{suffix} of $\pi$, denoted $\hd{\pi}$ and $\tl{\pi}$, respectively, as follows:
\begin{itemize}
 \item $\hd{r} = r$ and $\tl{r} = \diamond$;
 \item $\hd{\overline{r}} = \overline{r}$ and $\tl{\overline{r}} = \diamond$;
 \item $\hd{\pi_1 \comp \pi_2} = \hd{\pi_1}$ and $\tl{\pi_1 \comp \pi_2} = \tl{\pi_1} \comp \pi_2$;
 \item $\hd{\pi^+} = \hd{\pi}$ and $\tl{\pi^+} = \tl{\pi} \comp \pi^*$, where $\pi^*$ denotes $0$ or more occurrences of $\pi$.
\end{itemize}
\end{definition}

\begin{proposition}\label{def:head-path-condition-equals-edge}
 Let $\pi$ be a simple path condition.
 Then $\hd{\pi}$ is equal to $r$ or $\overline{r}$ for some $r \in R$.
 Moreover, $\tl{\pi}$ is a simple path condition.
\end{proposition}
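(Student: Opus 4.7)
The natural approach is structural induction on the definition of a simple path condition $\pi$ (with the implicit understanding that $\pi \ne \diamond$, since $\hd{\pi}$ and $\tl{\pi}$ are only defined in that case). Both parts of the proposition can be handled simultaneously within the same induction.

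For the base cases $\pi = r$ and $\pi = \overline{r}$ with $r \in R$, the statement is immediate from the first two clauses of Definition~\ref{def:HeadAndSuffix}: the head is exactly $r$ or $\overline{r}$, and the suffix is $\diamond$, which is a simple path condition by the base clause of the definition of a simple path condition. For the inductive case $\pi = \pi_1 \comp \pi_2$, note that being a simple path condition forces both $\pi_1 \ne \diamond$ and $\pi_2 \ne \diamond$, so the inductive hypothesis applies to $\pi_1$. Then $\hd{\pi} = \hd{\pi_1}$ is of the required form $r$ or $\overline{r}$, and $\tl{\pi} = \tl{\pi_1} \comp \pi_2$, where by the inductive hypothesis $\tl{\pi_1}$ is already a simple path condition; combined with $\pi_2 \ne \diamond$, this yields a simple path condition. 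For $\pi = \pi_1^+$ (with $\pi_1 \ne \diamond$ a simple path condition), the head $\hd{\pi_1^+} = \hd{\pi_1}$ is handled by the inductive hypothesis, and the suffix $\tl{\pi_1} \comp \pi_1^*$ is a simple path condition once $\pi_1^*$ is expanded as shorthand for either $\diamond$ (in which case the suffix reduces to $\tl{\pi_1}$) or $\pi_1^+$ (in which case the suffix is a composition of simple path conditions).

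The main subtlety — and the only real obstacle — is that if $\tl{\pi_1} = \diamond$, then writing $\tl{\pi_1} \comp \pi_2$ literally produces $\diamond \comp \pi_2$, which is not strictly a simple path condition under the given grammar (since the inductive clause forbids $\diamond$ on either side of $\comp$). I would address this by adopting the convention that $\diamond \comp \pi$ and $\pi \comp \diamond$ are syntactically identified with $\pi$, justified by Proposition~\ref{pro:simple-equivalences}(i); the same convention resolves the $\pi_1^* = \diamond$ subcase in the $\pi_1^+$ branch. With this convention the induction goes through cleanly, and the claim that $\tl{\pi}$ is a simple path condition holds literally rather than only up to equivalence.
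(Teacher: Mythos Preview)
Your approach is correct and matches the paper's: the paper's proof consists of a single sentence stating that the result follows immediately by induction on the structure of simple path conditions, which is exactly what you do in detail. Your identification of the $\diamond \comp \pi_2$ subtlety is a genuine observation that the paper glosses over.
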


\begin{proof}
 The results follow immediately by a simple induction on the structure of simple path conditions.
\end{proof}

\begin{proposition}\label{pro:path-condition-equiv-head-comp-suffix}
 Let $\pi$ be a simple path condition.
 Then $\pi \equiv \hd{\pi} \comp \tl{\pi}$.
\end{proposition}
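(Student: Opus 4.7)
The plan is to prove the claim by structural induction on the simple path condition $\pi$, following the four cases in Definition 8. The induction hypothesis will state that $\pi_1 \equiv \hd{\pi_1} \comp \tl{\pi_1}$ for every proper subformula $\pi_1$ of $\pi$. Throughout, I will silently use two easy lemmas about $\comp$ that follow directly from the existential-quantifier clause in Definition 2: (a) \emph{associativity} — $(\pi \comp \pi') \comp \pi'' \equiv \pi \comp (\pi' \comp \pi'')$; and (b) \emph{congruence} — if $\pi \equiv \pi'$ then $\pi \comp \sigma \equiv \pi' \comp \sigma$ and $\sigma \comp \pi \equiv \sigma \comp \pi'$. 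Each is a one-line unpacking of Definition 2.

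For the two base cases $\pi = r$ and $\pi = \overline{r}$, the right-hand side $\hd{\pi} \comp \tl{\pi}$ is literally $r \comp \diamond$ or $\overline{r} \comp \diamond$, and equivalence with $\pi$ is immediate from Proposition~\ref{pro:simple-equivalences}(i). For the case $\pi = \pi_1 \comp \pi_2$, observe that by Definition~\ref{def:HeadAndSuffix} we have $\hd{\pi} \comp \tl{\pi} = \hd{\pi_1} \comp (\tl{\pi_1} \comp \pi_2)$. Applying associativity and then the induction hypothesis on $\pi_1$ together with congruence gives $\hd{\pi_1} \comp \tl{\pi_1} \comp \pi_2 \equiv \pi_1 \comp \pi_2 = \pi$, as required.

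The only genuinely delicate case is $\pi = \pi_1^+$, because the expression $\tl{\pi_1^+} = \tl{\pi_1} \comp \pi_1^*$ uses the Kleene-star shorthand $\pi_1^*$, which is not a first-class path condition in the grammar. My plan is to read $\pi_1^*$ semantically as ``zero or more occurrences of $\pi_1$'', namely $G,u,v \models \pi_1^*$ iff $u = v$ or $G,u,v \models \pi_1^+$. With that reading, unfolding the recursive clause for $\pi_1^+$ in Definition~\ref{def:path-condition-semantics} yields the identity $\pi_1^+ \equiv \pi_1 \comp \pi_1^*$ directly: one direction splits off the first $\pi_1$-step and puts the (possibly empty) remainder into $\pi_1^*$; the other direction absorbs $\pi_1^*$ back into the disjunction that defines $\pi_1^+$. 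Then, using the induction hypothesis on $\pi_1$ together with congruence and associativity, we conclude
\[
  \pi_1^+ \;\equiv\; \pi_1 \comp \pi_1^* \;\equiv\; \hd{\pi_1} \comp \tl{\pi_1} \comp \pi_1^* \;=\; \hd{\pi_1^+} \comp \tl{\pi_1^+}.
\]

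The main obstacle, as indicated, is bookkeeping in the $\pi^+$ case: because $\pi^*$ is informal notation, I must be explicit about its intended semantics and verify the unrolling $\pi^+ \equiv \pi \comp \pi^*$ from Definition~\ref{def:path-condition-semantics} before the induction step goes through. The remaining cases are straightforward once associativity and congruence of $\comp$ are noted.
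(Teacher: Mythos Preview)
Your proof is correct and follows essentially the same route as the paper's: structural induction on $\pi$, with the base cases handled via $\pi \comp \diamond \equiv \pi$, the $\pi_1 \comp \pi_2$ case via the induction hypothesis on $\pi_1$, and the $\pi_1^+$ case via the unrolling $\pi_1^+ \equiv \pi_1 \comp \pi_1^*$. If anything, you are more careful than the paper in making explicit the associativity and congruence of $\comp$ and in spelling out the intended semantics of the informal $\pi^*$ notation before using it.
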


\begin{proof}
 The proof proceeds by induction on the structure of $\pi$.
 Consider the (base) case $\pi = r$. 
 Then
  \begin{align*}
   G,u,v \models \hd{r} \comp \tl{r} &\Leftrightarrow G,u,v \models r \comp \diamond \\
				     &\Leftrightarrow G,u,v \models r
  \end{align*}
 Thus $\hd{r} \comp \tl{r} \equiv r$, as required.
 We prove the case $\pi = \overline{r}$ in a similar fashion.
 Now consider $\pi = \pi_1 \comp \pi_2$ and assume the result holds for $\pi_1$ and $\pi_2$.
 Then
  \begin{align*}
     G,u,v \models \hd{\pi_1 \comp \pi_2} \comp \tl{\pi_1 \comp \pi_2} 
								       &\Leftrightarrow G,u,v \models \hd{\pi_1} \comp \tl{\pi_1} \comp \pi_2  \\ 
								       &\Leftrightarrow G,u,v \models \pi_1 \comp \pi_2 
  \end{align*}
 Finally, consider $\pi^+$ and assume the result holds for $\pi$.
 Then
  \begin{align*}
   G,u,v \models \hd{\pi^+} \comp \tl{\pi^+} 
				             &\Leftrightarrow G,u,v \models \hd{\pi} \comp \tl{\pi} \comp \pi^* \\ 
				             &\Leftrightarrow G,u,v \models \pi \comp \pi^* \\ 
				             &\Leftrightarrow G,u,v \models \pi^+
  \end{align*}
 concluding the proof.
\end{proof}

We now develop the path-matching algorithm in more detail.

\subsection{The Path-Matching Algorithm}\label{sec:Algorithm:Description}

%
The algorithm takes a start node (the subject), a target node (the object) and a path condition as part of its input.
The current node is initialized to be the start node.
The path-matching algorithm traverses the provided system graph `consuming' the head of the path condition as it matches it against (one or more of) the relationship labels associated with incident edges of the current node.
It then considers each of the adjacent edges in turn replacing the path condition with the relevant suffix.
The algorithm terminates if it `consumes' the entire path condition with the adjacent node equal to the target node or if no further matches can be made.

If we consider, for example, the graph in Figure~\ref{img:principal_matching_rule}, the request $(s,o,a)$ and path condition $r_1 \comp r_2$, then $\hd{r_1 \comp r_2} = r_1$, which is the label on edge $(s,v_1,r_1)$.
Hence, the edge is traversed and we next consider the node $v_1$ with path condition $\tl{r_1 \comp r_2} = r_2$.
The algorithm terminates at this point (returning false) because there is no outgoing edge from $v_1$ labelled $r_2$.

The \textsf{MatchPrincipal} algorithm (listed in Algorithm~\ref{alg:MatchPrincipal}) is, essentially, a modified breadth-first search algorithm.
However, there are some awkward aspects to the design of the algorithm.
First, we have to allow for nodes to be revisited.
Second, we have to allow matching of edge conditions of the form $r$ and $\overline{r}$.
Finally, our algorithm has to be able to handle path conditions of the form $\pi^+$ without entering an endless loop, in order for the algorithm to terminate.

\begin{algorithm}[!h]\footnotesize
    \caption{\textsf{MatchPrincipal}}
    \label{alg:MatchPrincipal}
    \algsetup{indent=2em}
    \begin{algorithmic}[1]
        \REQUIRE{Graph $G = (V,E)$, set of symmetric relationship labels $S$, nodes $u$ and $v$, and path condition $\pi$}
        \ENSURE{Returns true if $G,u,v \models \pi$ and false if it does not}
            \STATE{Initialize empty queue $Q$}
            \STATE{Initialize empty set of visited nodes $\seen$}
            \STATE{\textbf{add} $(u, \pi)$ \textbf{to} $Q$}
          	\STATE{$\seen = \seen \cup \set{(u, \pi)}$}
            \WHILE{$Q$ is not empty}
                \STATE{\textbf{dequeue} next entry $(h, \phi)$ from $Q$}
                \STATE{Initialize empty list of (node, suffix) tuples $\Theta$}
                \STATE{\COMMENT{consider edges directed away from $h$}}
                \FOR{each edge $(h,w,r) \in E$}
                    \IF{$\phi = \pi_1^* \comp \pi_2$}
                        \STATE{$\Theta = \Theta \sqcup [(h, \pi_2)]$}
                        \STATE{$\phi = \pi_1^+ \comp \pi_2$}
                    \ENDIF
                    \IF{$\hd{\phi} = r$}
                        \STATE{$\Theta = \Theta \sqcup [(w, \tl{\phi})]$}
                    \ENDIF
                    \IF{($r \in S$ \textbf{and} $(w,h,r) \not\in E$)}
                        \IF{$\hd{\phi} = \overline{r}$}
                            \STATE{$\Theta = \Theta \sqcup [(w, \tl{\phi})]$}
                        \ENDIF
                    \ENDIF
                \ENDFOR
                \STATE{\COMMENT{consider edges directed towards $h$}}
                \FOR{each edge $(w,h,r) \in E$}
                    \IF{$\phi = \pi_1^* \comp \pi_2$}
                        \STATE{$\Theta = \Theta \sqcup [(h, \pi_2)]$}
                        \STATE{$\phi = \pi_1^+ \comp \pi_2$}
                    \ENDIF
                    \IF{$\hd{\phi} = \overline{r}$}
                        \STATE{$\Theta = \Theta \sqcup [(w, \tl{\phi})]$}
                    \ENDIF
                    \IF{($r \in S$ \textbf{and} $(h,w,r) \not\in E$)}
                        \IF{$\hd{\phi} = r$}
                            \STATE{$\Theta = \Theta \sqcup [(w, \tl{\phi})]$}
                        \ENDIF
                    \ENDIF
                \ENDFOR
                \STATE{\COMMENT{determine match or other nodes to visit}}
                \FOR{each $(n, \phi_s) \in \Theta$}
                    \IF{$(n, \phi_s) \not\in \seen$}
                        \IF{$\phi_s = \diamond$}
                            \IF{$n = v$}
       		                   \RETURN{\TRUE} \COMMENT{match}
                            \ENDIF
                        \ELSE
                            \STATE{\textbf{add} $(n, \phi_s)$ \textbf{to} $Q$}
                            \STATE{$\seen = \seen \cup \set{(n, \phi_s)}$}
                        \ENDIF
                    \ENDIF
                \ENDFOR
            \ENDWHILE
            \RETURN{\FALSE} \COMMENT{no match}
    \end{algorithmic}
\end{algorithm}

The algorithm uses a queue $Q$ to track nodes that we have to visit.
Unlike a conventional breadth-first search, we allow those nodes to be revisited because path conditions may be satisfied by a cycle in the system graph.
However, if we revisit a node then we require a different non-empty path condition on each visit.
In this way we avoid infinite loops whilst processing the path condition.

Previously visited nodes, and the path condition at the time of the visit, are tracked using the set $\seen$.
At each node $h$ we visit, we identify incident edges from our system graph $G = (V,E)$, it is these edges that we attempt to traverse by matching a label to the head of our path condition.
Matched edges result in path condition suffixes relevant at specific adjacent nodes; we hold these in a list as node-suffix pairs. We use the notation $list_1 \sqcup list_2$ to indicate the concatenation of two lists, where the entries from $list_2$ are appended, in order, to the end of $list_1$.

The algorithm performs its edge condition matching in lines 8 to 22 and 23 to 37 for outgoing and incoming edges to the current node $h$ respectively.
However, the implementation is complicated by the handling of path conditions of the form $\pi^+$, whose suffix includes $\pi^*$ (see Proposition~\ref{pro:path-condition-equiv-head-comp-suffix}), which may represent $0$ occurrences of $\pi$ or at least one occurrence of $\pi$.
When processing path conditions, therefore, we first determine if it has the structure $\pi_1^* \comp \pi_2$ (where $\pi_2$ may be $\diamond$) and, if so, we treat it as $\pi_1^+ \comp \pi_2$; in addition we add $\pi_2$ (corresponding to $0$ occurrences of $\pi_1$), along with the current node $h$, to our list of node-suffix pairs for consideration later (see Algorithm~\ref{alg:MatchPrincipal} lines 10 to 13 and 25 to 28).

After an edge condition is matched by the algorithm, each node-suffix pair is checked against $\seen$ and ignored if previously processed. The suffix $\phi_s$ of each unseen tuple is compared to $\diamond$, those matching indicate fully processed path conditions. If the node $n$ associated with such a tuple is equal to the target node $v$ then the path condition is considered to have been matched between $u$ and $v$ and the algorithm returns true. If the node isn't the target node, then the tuple is discarded as there is no remaining path condition to evaluate. Those unseen tuples, whose suffixes are not $\diamond$, are added to the queue of nodes to be visited (see Algorithm~\ref{alg:MatchPrincipal} lines 38 to 50).

Once all incident edges are considered for the current node, we move to the next node as indicated by the next entry in the queue $Q$. If the queue is empty and we have not already returned a value, then the path condition cannot be matched (because there are no further nodes to examine) and the algorithm returns false (see Algorithm~\ref{alg:MatchPrincipal} lines 5, 6 and 52).

Consider, for example, the system graph depicted in Figure~\ref{img:principal_matching_rule} and the path condition $r_1^+ \comp \overline{r_2} \comp r_3 \comp r_4$ with start node $s$ and end node $o$.
Then we are able to match edge condition $r_1$ and progress to node $v_1$ with path condition $r_1^* \comp \overline {r_2} \comp r_3 \comp r_4$.
We now attempt to match $r_1$ again, which fails.
In addition, we add $(v_1,\overline{r_2} \comp r_3 \comp r_4)$ to the list of node-path condition pairs to consider.
This will, eventually, lead to the node-suffix pair $(o,\diamond)$ being identified, at which point the algorithm will return a match (for path condition $r_1^+ \comp \overline{r_2} \comp r_3 \comp r_4$ with start and end nodes $s$ and $o$).

\subsection{Correctness and Complexity}\label{sec:Algorithm:Complexity}

We first introduce the concept of the length of a simple path condition.
Informally, it is equal to the number of edge conditions ($r$ or $\overline{r}$) which it contains.

\begin{definition}
    The \emph{length} $\ell(\pi)$ of simple path condition $\pi$ is defined as follows:
    \begin{itemize}
        \item  $\ell(r) = \ell(\overline{r}) = 1$;
        \item  $\ell(\pi \comp \pi') = \ell(\pi) + \ell(\pi')$;
        \item  $\ell(\pi^+) = \ell(\pi)$.
    \end{itemize}
    The \emph{length} $\ell(\rho)$ of a principal-matching policy $\rho$ is equal to the length of the longest path condition of the principal-matching rules within $\rho$,
    $\ell(\rho) = \max\limits_{\pi \in \rho} \left(\ell(\pi)\right)$.
\end{definition}

Our algorithm terminates because, with one exception discussed below, $\ell(\tl{\pi}) = \ell(\pi) - 1$ because an edge is consumed in matching the head of the path condition.
Thus, any node-suffix pair that is enqueued contains a shorter path condition.
Eventually, the path condition will be reduced to $\diamond$ and we test whether the adjacent node is the target node.
The exception arises when we consider a path condition of the form $\pi^*\comp \pi'$.
In this case, we enqueue a path condition of the form $\pi'$ and also evaluate the path condition $\pi^+ \comp \pi'$.
Thus, we could, in the worst case visit every node in the system graph and evaluate the path condition $\pi^+\comp\pi'$.
However, we do not enqueue a node-suffix pair if we have previously evaluated it (in the same way that a normal breadth-first search keeps track of visited nodes).
Thus, for this exceptional case, we will eventually process the path condition $\pi'$ (since $\pi^+\comp\pi'$ will either be discarded or fail to find a matching edge).

We can summarise our path-matching algorithm's processing as a breadth-first search through the graph, attempting to match edges to the remaining path condition.
At each node the number of possible comparisons depends on the degree of that node.
The path condition under consideration is re-written as each edge comparison is performed, with the head of the path condition removed if the edge satisfied the next element in the path condition.
The path condition under consideration at adjacent nodes is, therefore, one element shorter than at the current one.

The time complexity of a standard breadth-first search is determined by the number of nodes and edges, since, in the worst case, each node and edge will be explored.
For the \textsf{PrincipalMatch} algorithm, the number of ``nodes'' is determined by the number of nodes in the system graph and the length of the path condition.
Specifically, the size of the queue is bounded by $|V| \cdot \ell(\phi)$.
The number of edges in the system graph is $O(|V|^2 \cdot |R|)$.
Thus, the total complexity of the algorithm is $O(|V| \cdot \ell(\phi) + |V|^2 \cdot |R|)$.
%

The \textsf{MatchPrincipal} algorithm determines whether a single path condition matches.
In order to compute the list of matching principals in the worst case, every rule in the principal-matching policy $\rho$ may need to be evaluated.
The worst-case time complexity of principal matching is, therefore, determined by the complexity of matching one rule, the number of rules in the policy and $\ell(\rho)$.

\subsection{Implementation}
We have created a Python implementation of the \textsf{MatchPrincipal} algorithm which roughly follows the structure shown in Algorithm~\ref{alg:MatchPrincipal}.
We represent a path condition as a tree of nodes, where each node is a data structure containing
\begin{inparaenum}[(i)]
  \item pointers to a left and a right node
  \item a relationship label if it is a leaf node
  \item a node type if the node is a non-leaf node (indicating the operation used to construct the path condition).
\end{inparaenum}
Our implementation modifies the pseudo-code listed in Algorithm~\ref{alg:MatchPrincipal} in order to improve the processing of path conditions containing $\pi^*$.
In particular, we process both possibilities for $\pi^*$ when we meet it, rather than simply putting one aside for consideration later (as we do in lines 11 and 26 of Algorithm~\ref{alg:MatchPrincipal}).

%

Using this implementation we evaluated the requests in our appendix example.
The results are summarized in Table~\ref{tbl:implementation_metrics}, which shows the number of nodes visited ($n$) and edges considered ($e$) during the evaluation of one specific principal-matching rule for each of these requests.

\begin{table*}[!ht]\centering
  {\renewcommand{\arraystretch}{1.25}
  \begin{tabular}{|r|r|l|r|r|r|}
    \hline
        \bf Path condition $\pi$ & $\ell(\pi)$ & \bf Request & $n$ & $e$ & \bf Found\\
    \hline
    \hline
        $\textsf{P} \comp \overline{\textsf{R}} \comp \overline{\textsf{M}}^+$ & 3 & $(\textsf{Sales.\#2}, \textsf{Func.Spec.\#1}, \textsf{write})$ & 5 & 19 & Yes\\
        $\textsf{P} \comp \overline{\textsf{R}} \comp \overline{\textsf{M}}^+$ & 3 & $(\textsf{Tech.\#2}, \textsf{Test.Spec.\#1}, \textsf{read})$ & 7 & 24 & Yes\\
        $\textsf{S} \comp \overline{\textsf{R}} \comp \overline{\textsf{M}}^+$  & 3 & $(\textsf{Tech.\#2}, \textsf{Func.Spec.\#1}, \textsf{write})$ & 4 & 15 & Yes\\
        $\textsf{S}^+ \comp \overline{\textsf{M}} \comp \textsf{S} \comp \overline{\textsf{D}} \comp \overline{\textsf{M}}^+$ & 5 & $(\textsf{CTO}, \textsf{Proj.\#1 Report\#1}, \textsf{read})$ & 17 & 58 & Yes\\
        $\textsf{S}^+ \comp \overline{\textsf{M}} \comp \textsf{S} \comp \overline{\textsf{D}} \comp \overline{\textsf{M}}^+$ & 5 & $(\textsf{CEO}, \textsf{Proj.\#1 Report\#1}, \textsf{read})$ & 7 & 24 & No\\
    \hline
  \end{tabular}}
\caption{Running our implementation of \textsf{MatchPrincipal} using path conditions and requests from Tables~\ref{tbl:example_company_principal_matching_policy} and~\ref{tbl:example_company_requests}}\label{tbl:implementation_metrics}
\end{table*}

Notice that the algorithm may visit many more nodes than exist on the shortest path between the subject and object of the request.
This is because we are using a breadth-first search.
Notice also that two different subject nodes may be the same distance from the object node (as is the case for the subjects in the first and second rows) and yet one request is resolved with less computational effort.
It would be interesting to see whether there is any advantage to be gained in using a depth-first search.
This is certainly something we hope to investigate in future work.

\section{Related Work}\label{sec:RelatedWork}

We have already noted those aspects of the Unix access control model and role-based access control that have influenced the design of our model.
Our work also takes inspiration from the formal model developed for Unix by Crampton~\cite{Crampton_UnixAccessControl}, which suggested that the two-stage evaluation process used by Unix could provide inspiration for novel relationship-based access control models.
We now compare our model with related work in the literature.



The widespread use of social networks and restricting the access to resources within such networks has inspired the development of research into relationship-based access control.
The early work of Kruk {\em et al.} used {\sf friend} and {\sf friend-of-a-friend} relationships to determine access to resources~\cite{KrGrGzWoCh06}, while the work of Ali {\em et al.} was based on the trust relationships between users~\cite{AlViMa07}.
Carminati {\em et al.} synthesized these elements to create an access control model for social networks based on relationships~\cite{Carminati_Enforcing}.
They represent a social network as a graph in which the edges are labelled by relationships (such as {\sf friend}) and all nodes represent users.
Each edge is also labelled with a trust value, indicating the ``strength'' of the relationship.
An access condition has the form $(u,r,d,t)$, where $u$ is a user, $r$ is a relationship label, $d$ is the depth and $t$ is the trust threshold.
An access rule has the form $(o,C)$, where $o$ is an object and $C$ is a set of access conditions.
A user $v$ is authorized to access the resource $o$ if $v$ satisfies the access conditions specified in $C$.
More recent work has built on this model to provide additional features, such as joint management of access policies, but only in the context of social networks~\cite{HuAhJo13}.

If we ignore the trust threshold, access conditions are a special case of path conditions.
Specifically a relationship $r$ of depth $d$ can be represented by the path condition $r \comp \dots \comp r$ (repeated $d$ times).
Moreover, we can specify relationships of unbounded depth using the path condition $r^+$.
However, access conditions certainly cannot represent arbitrary path conditions.
In other words, our approach significantly extends the possibilities for policy specification.
(Trust thresholds may be useful in social networks, but we feel their use for our intended applications is inappropriate.
Of course, our framework may be easily adapted to accommodate trust thresholds by having a path condition built from pairs of the form $(r,t)$, where $r$ is a relationship label and $t$ is a threshold.)

Fong's recent work on relationship-based access control also concentrates on access control in social networks and models the social network as a graph in which the edges are labelled by relationships and all nodes are users~\cite{Fong_ReBAC,Fong_ESORICS09}.
Fong's work specifies a policy for each resource, where a policy is specified using a multi-modal logic.\footnote{The rationale for using a modal logic is that each relationship specifies an accessibility relation between users, which is used to provide semantics for policies.  We do this more directly by working with path conditions and specifying their semantics in terms of a graph.}
Thus Fong's work provides a richer policy language than that of Carminati {\em et al.}
The policy syntax is specified by the grammar \[  \phi, \psi ::= \top \mid \textsf{a} \mid \neg \phi \mid \phi \vee \psi \mid \langle i \rangle \phi \] where $i$ is a relationship identifier.
Informally, $\top$ serves the same purpose as our default path condition $\top$; $\sf a$ is analogous to $\diamond$; $\langle i \rangle$ is equivalent to our path condition $r$.
Fong's language can encode alternatives (using $\vee$); we would simply specify alternative principal-matching rules.
Fong's language does support negation, which we do not.
Conversely, our language does support unbounded path conditions, which are useful when traversing a sub-graph comprising similar types of elements that might have arbitrary diameter (as in a directory tree, for example).
A limitation of Fong's language is that a policy has to be specified for every resource and admits no relationships, other than ownership, between users and resources.
In our approach, we simply identify the principals that apply to a request, given the subject and object of the request, thereby leveraging some of the advantages of a role-based approach.
Moreover, we allow for arbitrary relationships between the nodes (subject to the constraints in the permissible relationship graph) in the system graph.

Cheng {\em et al.} also focused on the use of relationship-based access control within Online Social Networks ~\cite{ChengPS12passat,ChengPS12dbsec}.
Their work allows for the specification of user-to-resource relationships (other than ownership).
However, our model is more general still in its support for entities of any kind (including logical ones) and policies not focused on, but still applicable to, social networks.
Cheng {\em et al.} employ a path checking algorithm which is comparable to our concept of path matching.
However, their approach directly assigns permissions, whereas we introduce some of the benefits of RBAC and Unix access control by abstracting that assignment to matched principals.
Their path expressions are directly based on regular expressions, including wildcards, although they constrain rules containing wildcards so that such rules could, in fact, be enumerated as different alternatives.
(Thus paths of arbitrary length are not properly supported.)
In contrast, we only provide direct support for $\pi^+$ in path conditions, but do not limit the number of edges across which it can match, something that is crucial when dealing with variable depth data structures such as directory trees.
Moreover, as we have seen, we can encode alternation in a rule's path condition as two (or more rules): the rule $(\pi_1 \mid \pi_2,p)$ is simply defined as two principal-matching rules $(\pi_1,p)$ and $(\pi_2,p)$.
Similarly, we can handle $(\pi^*,p)$ by defining the rules $(\pi^+,p)$ and $(\diamond,p)$.

\section{Conclusion}\label{sec:Conclusion}

We have formally defined a new graph-based model for access control based on two concepts: path conditions and principal matching.
We believe that path conditions are a novel contribution to the literature on relationship-based access control and that these conditions allow us to specify a wide range of policies that are relevant to access control in a wide range of applications, not just in the usual context of social networks.
Principal matching enables us to leverage the advantages of both Unix and RBAC and extend the capabilities of both models.
We also believe our model provides significant advantages over existing models for relationship-based access control, both in terms of the expressive power of path conditions and the relatively straightforward request evaluation process.
Additionally, our model is generic, thus able to describe systems of various forms be they social networks, IT systems (singularly or as networks) or entire businesses.
We have illustrated how the model can be implemented by describing an algorithm to support principal matching and, thereby, enable request evaluation within our model.

There are many opportunities for further work.
In particular, we would like to investigate alternative path-matching algorithms and compare their efficiency with the one described in Section~\ref{sec:Algorithm}.
SPARQL is an RDF query language that may well be an suitable alternative.
We would also like to extend the policy language to include more expressive matching as a means of directly supporting access constraints such as separation of duty, binding of duty and Chinese Wall.
We believe that such constraints can be supported simply by introducing conjunction within, and negation of, path conditions.
Extending this further we plan to consider the matching of subgraphs, rather than paths, and to investigate the trade-offs in increased expressive power with the more expensive request evaluation algorithms that will be required.
We also intend to develop an administrative model to manage components such as the system graph.
In this way, we should be able to handle dynamic concepts, such as sessions in RBAC.
RT is a family of role-based trust management languages~\cite{LiMiWi02} that combine features of RBAC with distributed access control models.
Many of the rules of RT can, like the assignment relations in RBAC, be encoded as a single type of relationship within a system graph in our model.
However, the RT delegation rule $A.r \leftarrow A'.r'.r''$ cannot be directly encoded within our model.
We would like to be able to provide support for distributed access control, in which different parts of the subgraph form different administrative domains.
Then RT-like rules would specify the edges that link different subgraphs.
Finally, we would also like to enrich the model with stateful objects, such as workflow tasks, for which the set of authorized individuals may change over time.
We expect that this will result in the system graph being updated as the state of an object changes (for example to support task-based separation of duty).

\bibliography{../RelixAC}
\bibliographystyle{abbrv}

\clearpage

\appendix
\section{Corporate Example}\label{appendix:section:example}

The following example applies our model to the project environment within a fictional company. To support this specific system, we initially define the underlying system model $(T,R,S,G_{\textrm{PR}})$ where the set of types is
\[T = \{\textsf{File}, \textsf{Folder}, \textsf{Group}, \textsf{Printer}, \textsf{Project}, \textsf{User}\}\]
The set of relationship labels used in the system model is
\[
    \begin{array}{l}
        R = \{\textsf{\textbf{C}lient-of}, \textsf{\textbf{D}eliverable-for}, \textsf{\textbf{M}ember-of}, \\
        \textsf{\textbf{P}articipant-of}, \textsf{\textbf{R}esource-for}, \textsf{\textbf{S}upervises}\}
    \end{array}
\]
There are no symmetric relationship labels.
Finally, the permissible relationship graph $G_{\textrm{PR}}$, defined using $T$ and $R$, is shown in Figure~\ref{img:example_company_edge_constraints}.

\begin{figure}[!ht]
  \centering
  \includegraphics[scale=.55]{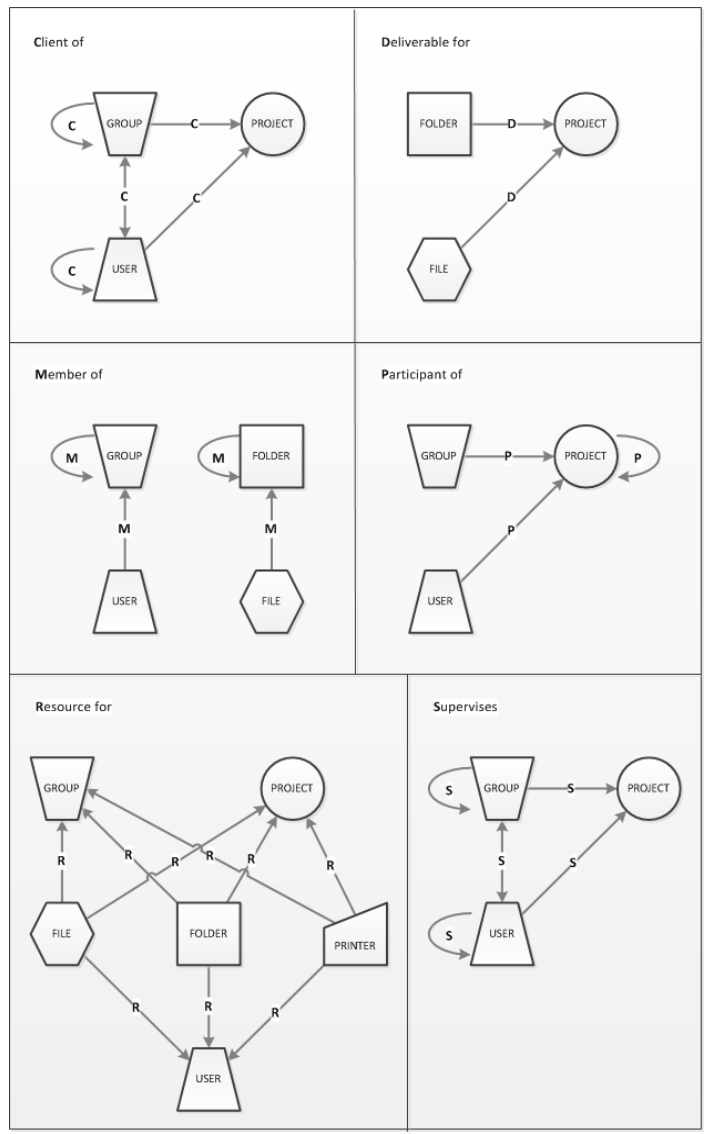}
  \caption{Permissible relationship graph}\label{img:example_company_edge_constraints}
\end{figure}

Using this system model we then describe the project environment using the system graph shown in Figure~\ref{img:example_company_system_graph}.

\begin{figure*}[!ht]
  \centering
  \includegraphics[scale=.55]{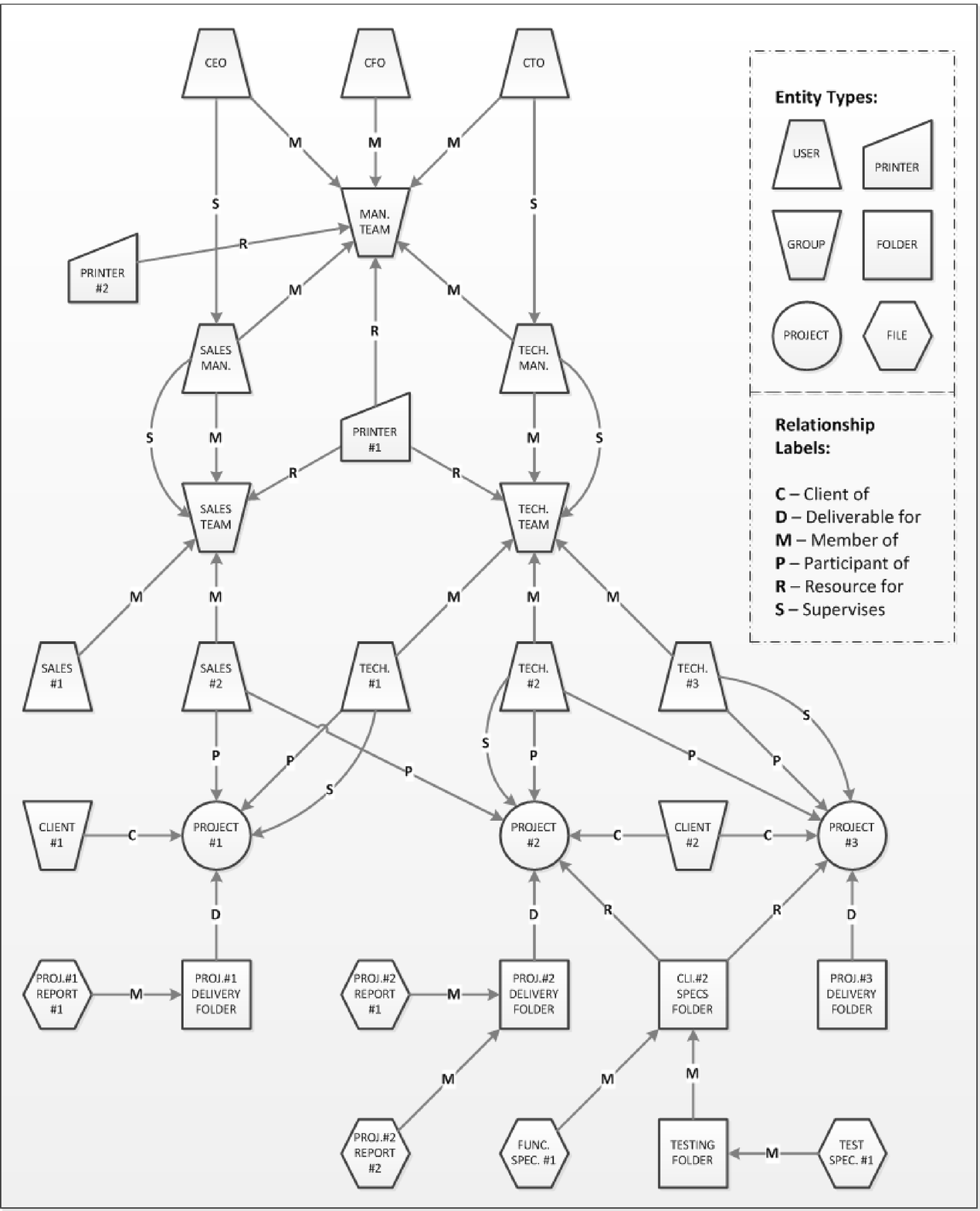}
  \caption{System graph}\label{img:example_company_system_graph}
\end{figure*}

Within our authorization system we define the principal-matching rules shown in Table~\ref{tbl:example_company_principal_matching_policy} and make use of the \textsf{AllMatch} PMS.

\begin{table}[!ht]\centering
  {\renewcommand{\arraystretch}{1.25}
  \begin{tabular}{|r|l|}
    \hline
        \bf \# & \bf Principal-Matching Rule \\
    \hline
    \hline
        1 & $(\textsf{C} \comp \overline{\textsf{D}} \comp \overline{\textsf{M}}^+, \textsf{Deliverable Client})$ \\
        2 & $(\textsf{S}^+ \comp \overline{\textsf{M}} \comp \textsf{S} \comp \overline{\textsf{D}}, \textsf{Deliverable Reviewer})$ \\
        3 & $(\textsf{S}^+ \comp \overline{\textsf{M}} \comp \textsf{S} \comp \overline{\textsf{D}} \comp \overline{\textsf{M}}^+, \textsf{Deliverable Reviewer})$ \\
        4 & $(\textsf{S} \comp \overline{\textsf{D}}, \textsf{Deliverable Supervisor})$ \\
        5 & $(\textsf{S} \comp \overline{\textsf{D}} \comp \overline{\textsf{M}}^+, \textsf{Deliverable Supervisor})$ \\
        6 & $(\textsf{P} \comp \overline{\textsf{D}}, \textsf{Deliverable User})$ \\
        7 & $(\textsf{P} \comp \overline{\textsf{D}} \comp \overline{\textsf{M}}^+, \textsf{Deliverable User})$ \\
        8 & $(\textsf{S} \comp \overline{\textsf{R}}, \textsf{Project Resource Supervisor})$ \\
        9 & $(\textsf{S} \comp \overline{\textsf{R}} \comp \overline{\textsf{M}}^+, \textsf{Project Resource Supervisor})$ \\
        10 & $(\textsf{P} \comp \overline{\textsf{R}}, \textsf{Project Resource User})$ \\
        11 & $(\textsf{P} \comp \overline{\textsf{R}} \comp \overline{\textsf{M}}^+, \textsf{Project Resource User})$ \\
        12 & $(\textsf{M} \comp \overline{\textsf{R}}, \textsf{Team Resource User})$ \\
    \hline
  \end{tabular}}
\caption{Principal-matching policy}\label{tbl:example_company_principal_matching_policy}
\end{table}

Additionally, we define the authorization policy shown in Table~\ref{tbl:example_company_authorization_policy} and whilst we define no per-subject or per-object defaults, we define the system-wide default as \textsf{deny-by-default}. We employ the \textsf{FirstMatch} conflict resolution strategy.

\begin{table}[!ht]\centering
  {\renewcommand{\arraystretch}{1.25}
  \begin{tabular}{|r|l|}
    \hline
        \bf \# & \bf Authorization Rule \\
    \hline
    \hline
        1 & $(\textsf{Deliverable Client}, \star, \textsf{read}, 1)$ \\
        2 & $(\textsf{Deliverable Reviewer}, \star, \textsf{read}, 1)$ \\
        3 & $(\textsf{Deliverable Supervisor}, \star, \textsf{read}, 1)$ \\
        4 & $(\textsf{Deliverable Supervisor}, \star, \textsf{write}, 1)$ \\
        5 & $(\textsf{Deliverable User}, \star, \textsf{read}, 1)$ \\
        6 & $(\textsf{Project Resource Supervisor}, \star, \textsf{read}, 1)$ \\
        7 & $(\textsf{Project Resource Supervisor}, \star, \textsf{write}, 1)$ \\
        8 & $(\textsf{Project Resource User}, \star, \textsf{read}, 1)$ \\
        9 & $(\textsf{Project Resource User}, \textsf{Func.Spec.\#1}, \textsf{write}, 0)$ \\
        10 & $(\textsf{Project Resource User}, \star, \textsf{write}, 1)$ \\
        11 & $(\textsf{Team Resource User}, \star, \textsf{write}, 1)$ \\
    \hline
  \end{tabular}}
\caption{Authorization policy}\label{tbl:example_company_authorization_policy}
\end{table}

Table~\ref{tbl:example_company_requests} lists some illustrative requests, together with the result of their evaluation.
Requests 1 and 2 would result in the list of matched principals $[\textsf{Project Resource Supervisor}, \textsf{Project Resource User}]$.
Requests 3 and 4 would result in the matched principal lists $[\textsf{Project Resource User}]$ and $[\textsf{Deliverable reviewer}]$, respectively, while the final request would match no principals.

\begin{table*}[!ht]\centering
  {\renewcommand{\arraystretch}{1.25}
  \begin{tabular}{|r|l|l|l|l|}
    \hline
        \bf \# & \bf Request & \bf Decision set & \bf Outcome & \bf Comment \\
    \hline
    \hline
        1 & $(\textsf{Tech.\#2}, \textsf{Test.Spec.\#1}, \textsf{read})$ & $\set{1}$ & Allow & \\
        2 & $(\textsf{Tech.\#2}, \textsf{Func.Spec.\#1}, \textsf{write})$ & $\set{1,0}$ & Allow & First match \\
        3 & $(\textsf{Sales.\#2}, \textsf{Func.Spec.\#1}, \textsf{write})$ & $\set{0}$ & Deny & \\
        4 & $(\textsf{CTO}, \textsf{Proj.\#1 Report\#1}, \textsf{read})$ & $\set{1}$ & Allow & \\
        5 & $(\textsf{CEO}, \textsf{Proj.\#1 Report\#1}, \textsf{read})$ & $\set{}$ & Deny & System default \\
    \hline
  \end{tabular}}
\caption{Sample requests}\label{tbl:example_company_requests}
\end{table*}

\end{document}